%
%
%
%
%

\RequirePackage{fix-cm}
\RequirePackage{amsmath,amssymb,amsfonts}
\RequirePackage{mathtools}
\RequirePackage{graphicx}
\RequirePackage{amsmath}
\RequirePackage{textcomp}
\RequirePackage{algorithm}
\RequirePackage{upgreek}
\RequirePackage{booktabs}
\RequirePackage{threeparttable}
\documentclass[smallextended]{svjour3}       
\smartqed  
\usepackage{graphicx}
%
%
%
%
%
\begin{document}

\title{A class of quantum synchronizable codes of length $2^n$ over ${\rm F}_q$ 
\thanks{This work was supported by Graduate Innovation Project of China University of Petroleum (East China) (YCX2021137).}
%
}

\author{Shiwen Sun\and Tongjiang Yan\and Yuhua Sun\and Tao Wang\and Xueting Wang}


\institute{Shiwen Sun \at
              \email{winner\_ssw@163.com}           
           \and
           Tongjiang Yan \at
           \email{yantoji@163.com} 
           \and
           Yuhua Sun \at
           \email{sunyuhua\_1@163.com} 
           \and
           Tao Wang \at
           \email{1409010215@s.upc.edu.cn} 
           \and
           Xueting Wang \at
           \email{wxt961425988@163.com}
}

\date{Received: date / Accepted: date}

\maketitle

\begin{abstract}
Quantum synchronizable codes (QSCs) are special quantum error-correcting codes which can be used to correct the effects of quantum noise on qubits and misalignment in block synchronization. In this paper, a new class of quantum synchronizable codes of length $2^n$ are constructed by using the cyclotomic cosets, whose synchronization capabilities always reach the upper bound $2^n$. 
\keywords{Quantum synchronizable codes \and Cyclic codes \and Dual-containing codes \and Q-cyclotomic cosets}
\end{abstract}
\section{Introduction}\label{intro}
As a key technology for future communication security, quantum communication is bound to be commercially available on a large scale to provide reliable security for the development of the information society. In the process of quantum computing, the qubits of a quantum computer are not isolated, so they inevitably interact with the external environment which destroys the coherence between the qubits and leads to quantum decoherence. Using the quantum error-correcting codes is an important method to ensure the integrity and reliability of information in the process of quantum transmission. 

Quantum error-correcting codes are used to prevent and correct quantum errors by introducing redundant information in an appropriate way in order to improve the resistance of information to interference. In 1995, Shor \cite{Shor P W}gave a quantum error-correcting scheme which aims to reduce the effects of decoherence for information stored in quantum memory. In 1996, Calderbank et al. \cite{Steane A M,Calderbank S}proposed the quantum CSS construction method, which made it possible to construct quantum error-correcting codes by classical error-correcting codes with some special properties. Besides they established the connection between quantum error-correcting codes and classical error-correcting codes. Then, they systematically established a mathematical model of quantum error-correcting codes and gave an effective mathematical method to construct quantum error-correcting codes from classical error-correcting codes from ${\rm F}_2$ to ${\rm F}_4$, by using the characteristic theory of finite exchange groups, so as to construct some good quantum error-correcting codes \cite{Calderbank R}.

Quantum synchronizable codes are special quantum error-correcting codes that can correct both the effect of quantum noise on qubits and misalignment in block synchronization. In 2013, Fujiwara \cite{FujiBsf} proposed a construction method to generate a QSC by two cyclic codes $C_1$ and $C_2$ satisfying the chain condition $C_1^\perp\subseteq C_1\subseteq C_2$. Based on it, the QSC can not only correct at least up to $\lfloor\frac{d_2-1}{2}\rfloor$ bit errors and at least up to $\lfloor\frac{d_1-1}{2}\rfloor$ phases errors, but also misalignment up to the left by $c_l$ qubits and up to the right by $c_r$ qubits, where $c_l$ and $c_r$ are both nonnegtive integers. In addition, the method has been proved by Fujiwara et al. which makes the tolerable magnitude of misalignment reach the order of $f(x)=\frac{g_1(x)}{g_2(x)}$, where $g_1(x)$ and $g_2(x)$ are the generator polynomials of $C_1$ and $C_2$ respectively \cite{FujiTW}. Based on the above theory, some QSCs have been constructed recent years. In 2014, Xie et al. constructed QSCs from quadratic residue and their supercodes \cite{XieYFu}. Then Li et al. proposed a new method of QSC construction by using the cyclotomic classes of order four in 2019 \cite{LiZL}. Afterwards, Shi et al. \cite{ShiQX} obtained some QSCs from the Whiteman's generalized cyclotomy which is a powerful tool to construct many combinatorial designs in 2020. Recently, some QSCs from cyclotomic classes of order two are constructed by Wang \cite{WangT}.

In 2012, Bakshi et al. proposed a class of constacyclic codes over a finite field and obtained all the self-dual negacyclic codes of length $2^n$ over ${\rm F}_q$ \cite{G.K. Bakshi}. Inspired by this article and the theory proposed by Fujiwara et al., we construct two classes of QSCs of length $2^n$ from the cyclic codes obtained by $q$-cyclotomic cosets. These QSCs can possess good error-correcting capability towards bit errors and phase errors and their misalignment tolerance can reach $2^n$ which is the upper limit.

The rest of the paper is organized as follows. Some definitions and notations on cyclic codes and cyclotomic cosets are introduced in Section \ref{sec:1}. In Sections \ref{sec:3} and \ref{sec:4}, we deduce some dual-containing codes of length $2^n$ over ${\rm F}_q$ and some augmented cyclic codes respectively. Besides, we derive some necessary lemmas to get our main results. Then two classes of QSCs are constructed in Section \ref{sec:5}.

\section{Preliminaries}
\label{sec:1}
Let ${\rm F}_q$ be a finite field, where $q$ is a power of an odd prime. In this section, we will introduce some properties of cyclic codes and cyclotomic cosets that are needed in our consideration later.
\subsection{Cyclic codes}
A linear code $C$ of length $N$ and dimension $k$ over ${\rm F}_q$ is often called a $[N,k]$-linear code. If the minimum distance $d$ of $C$ is known, it is also sometimes referred to as an $[N,k,d]$-linear code.
Besides, a linear code $C$ is called a cyclic code if $\{c_0,c_1,\cdots,c_{N-1}\}$ is a codeword of $C$ and $\{c_{N-1},c_0,,\cdots,c_{N-2}\}$ is also  a codeword of $C$. In order to convert the combinatorial structure of cyclic codes into an algebraic one, we consider the following correspondence\cite{San Ling1}
\begin{equation*}
	\pi:{\rm F}_q^N\to {\rm F}_q[x]/(x^N-1),\ (a_0,a_1,\cdots,a_{N-1})\mapsto a_0+a_1x+\cdots+a_{N-1}x^{N-1}.
\end{equation*}
Then $\pi$ is an ${\rm F}_q$-linear transformation of vector spaces over ${\rm F}_q$. Now we can identify a codeword $(c_0,c_1,\cdots,c_{N-1})$ with the polynomial $c(x)=\sum_{i=0}^{N-1}c_ix^i$. From the definition of $\pi$, we know that a nonempty subset C of ${\rm F}_q^N$ is a cyclic code if and only if $\pi(C)$ is an ideal of ${\rm F}_q[x]/(x^N-1)$. Besides, if $I$ is a nonzero ideal in ${\rm F}_q[x]/(x^N-1)$ and $g(x)$ is a nonzero monic polynomial of the least degree in $I$, then $g(x)$ is a generator of $I$ and divides $x^N-1$. And $h(x)=\frac{x^N-1}{g(x)}$ is called the parity check polynomial of $C$. The dual code of $C$ is defined by $C^\perp=\{c'\in {\rm F}_q^N|(c',c)=0$ for all $ c\in C\}$, where $(c',c)$ is the Euclidean inner product of $c'$ and $c$. The generator polynomial of $C^\perp$ is $\stackrel{\thicksim}{h}(x)$, where 	$\stackrel{\thicksim}{h}(x)=h(0)^{-1}x^{deg(h(x))}h(x^{-1})$ is the reciprocal polynomial of $h(x)$.

\begin{lemma}\label{BCH}\cite{Mac Willianms}
	Let $C$ be a cyclic code of length $N$ over ${\rm F}_q$. Let $\alpha$ be a primitive $Nth$ root of unity in an extension field of ${\rm F}_q$. Assume the set consisting of the roots of the generator polynomial of $C$ includes the set $\{\alpha^i|i_1\le i\le i_1+d-2\}$. Then the minimum distance of $C$ is at least $d$.
\end{lemma}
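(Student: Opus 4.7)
The plan is to argue by contradiction, reducing the statement to the nonvanishing of a Vandermonde determinant. First I would assume there exists a nonzero codeword $c(x) \in C$ of Hamming weight $w$ with $w \le d-1$, and write $c(x) = c_{k_1} x^{k_1} + \cdots + c_{k_w} x^{k_w}$ with $c_{k_j} \ne 0$ and distinct exponents $0 \le k_1 < \cdots < k_w \le N-1$. Since the generator polynomial of $C$ divides $c(x)$, and by hypothesis $\alpha^{i_1}, \alpha^{i_1+1}, \ldots, \alpha^{i_1+d-2}$ are among its roots, each such power of $\alpha$ must annihilate $c(x)$.

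Substituting, I obtain the system
\begin{equation*}
\sum_{j=1}^{w} c_{k_j}\, \alpha^{(i_1+t)k_j} = 0, \qquad t = 0, 1, \ldots, d-2.
\end{equation*}
Because $w \le d-1$, I may retain just the first $w$ of these equations to form a homogeneous square system whose $w \times w$ coefficient matrix $M$ has $(t,j)$-entry $\alpha^{(i_1+t)k_j}$. The key observation is that $M$ factors as $M = V \cdot D$, where $D = \operatorname{diag}(\alpha^{i_1 k_1}, \ldots, \alpha^{i_1 k_w})$ is clearly invertible and $V$ is the Vandermonde matrix with $(t,j)$-entry $(\alpha^{k_j})^t$.

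Since $\alpha$ is a primitive $N$-th root of unity and $k_1, \ldots, k_w$ are distinct residues modulo $N$, the values $\alpha^{k_1}, \ldots, \alpha^{k_w}$ are pairwise distinct, so $\det V = \prod_{1 \le r < s \le w} (\alpha^{k_s} - \alpha^{k_r}) \ne 0$. Hence $M$ is nonsingular, forcing $(c_{k_1}, \ldots, c_{k_w})^{T} = 0$, which contradicts $c_{k_j} \ne 0$. Therefore every nonzero codeword has weight at least $d$. The main conceptual step — and essentially the only subtle one — is isolating the Vandermonde structure in $M$; once the factorization $M = V D$ is in hand, the determinant computation is a standard identity, and I would expect the whole argument to occupy only a short paragraph in the paper.
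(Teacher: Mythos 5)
Your proof is correct: this is the classical Vandermonde-determinant proof of the BCH bound, and the paper itself gives no proof of this lemma, simply citing MacWilliams--Sloane, where essentially the same argument appears. The one small point worth making explicit is that the first $w$ equations exist because $w-1 \le d-2$, which you have implicitly used and which holds by your assumption $w \le d-1$; otherwise the argument is complete.
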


\subsection{Cyclotomic cosets}
\label{sec:2}
For any $s\geq 0$, let $C_s=\{s,sq,sq^2,\cdots,sq^{m_s-1}\}$ denote the $q$-cyclotomic coset containing $s$ modulo $2^n$, where $m_s$ is the least positive integer such that $sq^{m_s}\equiv s({\rm mod} \ 2^n)$. 
\begin{definition}\label{alpha_def}
	Let $\alpha$ be a primitive $2^n$-th root of unity in some extension field of ${\rm F}_q$. It is well known that \cite{G.K. Bakshi}
	\begin{equation}\label{M_s(x)}
		M_s(x)=\prod_{i\in C_s}(x-\alpha^i)	
	\end{equation}
	is the minimal polynomial of $\alpha^s$ over ${\rm F}_q$ and
	\begin{equation*}
		x^{2^n}-1=\prod_{s\in \omega} M_s(x)
	\end{equation*}
	gives the factorization of $x^{2^n}-1$ into irreducible factors over ${\rm F}_q$, where $\omega$ is a complete set of representatives from distinct $q$-cyclotomic cosets modulo $2^n$. 
\end{definition}

\begin{lemma}\label{lm1}\cite{G.K. Bakshi}
	Let n $\geq 3$ and $q=1+2^zc$, where $z\geq 2$ and $c$ is odd. All the distinct $q$-cyclotomic cosets modulo $2^n$ are given by $C_0$, $C_{2^{n-1}}$ and $C_{S2^{n-r}}$ for $2\leq r\leq n$ and $S$ runs over $S_r$ for each r, where
	\begin{equation}\label{Sr}
		S_r=
		\begin{cases}
			\{\pm 1,\pm 3,\cdots ,\pm 3^{(2^{z-2}-1)}\},\ z+1\leq r\leq n,\\
			\{\pm 1,\pm 3,\cdots ,\pm 3^{(2^{r-2}-1)}\},\ 2\leq r\leq z.			
		\end{cases}
	\end{equation}
\end{lemma}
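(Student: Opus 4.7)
The plan is to stratify $\mathbb{Z}/2^n\mathbb{Z}$ by $2$-adic valuation. Every residue is either $0$, $2^{n-1}$, or of the form $S\cdot 2^{n-r}$ with $S\in (\mathbb{Z}/2^r\mathbb{Z})^{*}$ and $2\le r\le n$, and since $q$ is a unit mod $2$, multiplication by $q$ preserves the $2$-adic valuation. Hence the $q$-cyclotomic cosets at level $r$ are in natural bijection, via $S\mapsto S\cdot 2^{n-r}$, with the orbits of multiplication by $q$ on the unit group $(\mathbb{Z}/2^r\mathbb{Z})^{*}$. Identifying these orbits and choosing representatives is therefore the whole task.

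I would then establish two arithmetic facts. First, for each $r\ge 2$ the group $(\mathbb{Z}/2^r\mathbb{Z})^{*}$ coincides with $\{\pm 3^j : 0\le j\le 2^{r-2}-1\}$: induction on $k$ gives $3^{2^{k}}\equiv 1+2^{k+2}\pmod{2^{k+3}}$, which shows both that $3$ has order $2^{r-2}$ and that $-1\notin\langle 3\rangle$, so the $2^{r-1}$ residues $\pm 3^{j}$ are pairwise distinct and fill the group. Second, the multiplicative order of $q=1+2^{z}c$ modulo $2^r$ equals $1$ for $r\le z$ and $2^{r-z}$ for $r\ge z$: iterating $(1+2^{k}u)^{2}=1+2^{k+1}u(1+2^{k-1}u)$ with $u$ odd and $k\ge 2$ yields $q^{2^{j}}\equiv 1\pmod{2^{z+j}}$ but $q^{2^{j}}\not\equiv 1\pmod{2^{z+j+1}}$ for every $j\ge 0$.

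With these in hand the two regimes fall out. When $2\le r\le z$, $q\equiv 1\pmod{2^{r}}$, so every orbit is a singleton and the whole group $(\mathbb{Z}/2^{r}\mathbb{Z})^{*}=\{\pm 3^{j}:0\le j\le 2^{r-2}-1\}$ serves as its own set of representatives, matching the second case of $S_r$. When $z+1\le r\le n$, the cyclic subgroup $\langle q\rangle$ has order $2^{r-z}$ and lies inside the subgroup $1+2^{z}\mathbb{Z}/2^{r}\mathbb{Z}$, which itself has order $2^{r-z}$; equality forces the orbits of $q$ to be exactly the cosets of $1+2^{z}\mathbb{Z}/2^{r}\mathbb{Z}$. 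Under the reduction $(\mathbb{Z}/2^{r}\mathbb{Z})^{*}\twoheadrightarrow(\mathbb{Z}/2^{z}\mathbb{Z})^{*}$ these cosets correspond bijectively to the units modulo $2^{z}$, which by the first fact are represented by $\{\pm 3^{j}:0\le j\le 2^{z-2}-1\}$, and any such lift serves as a representative in $(\mathbb{Z}/2^{r}\mathbb{Z})^{*}$. This is precisely $S_r$; adjoining the trivial cosets $C_{0}$ and $C_{2^{n-1}}$ then completes the list.

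The only delicate point will be the $2$-adic bookkeeping in the two lifting computations, where the precise power of $2$ must be tracked so that the orbit sizes and the counts $|S_r|$ agree; once those identities are in place, the remainder is pure group-theoretic counting and uses no input specific to the base field beyond the hypothesis $q=1+2^{z}c$ with $c$ odd and $z\ge 2$.
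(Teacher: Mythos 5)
The paper does not actually prove this lemma: it is imported verbatim from the cited reference of Bakshi and Raka, so there is no in-paper argument to compare yours against. On its own merits, your proof is correct and complete. The stratification by $2$-adic valuation, the identification of the level-$r$ cosets with the orbits of $\langle q\rangle$ on $(\mathbb{Z}/2^r\mathbb{Z})^{*}$, the structure fact $(\mathbb{Z}/2^r\mathbb{Z})^{*}=\{\pm 3^{j}\}$, the computation ${\rm ord}_{2^r}(q)=2^{\max(0,\,r-z)}$, and the key step that $\langle q\rangle$ equals the full kernel $1+2^{z}\mathbb{Z}/2^{r}\mathbb{Z}$ of reduction mod $2^{z}$ by an order count (so orbits are exactly the fibres of that reduction) together give precisely the two regimes of $S_r$, and the element count $2+\sum_{r=2}^{n}2^{r-1}=2^{n}$ confirms completeness. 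The one slip is the induction anchor: $3^{2^{k}}\equiv 1+2^{k+2}\pmod{2^{k+3}}$ fails at $k=0$ (it would assert $3\equiv 5\pmod 8$); the identity holds for $k\ge 1$, starting from $3^{2}=1+2^{3}$, and the cases $r\in\{2,3\}$ should be checked directly. This does not affect any downstream step, since the order of $3$ and the fact that $-1\notin\langle 3\rangle$ (visible already mod $8$) are all that is used.
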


\begin{definition}
	According to the definition of $M_s$, denote $s$ in a more specific form, i.e., $S2^{n-r}$, where $S, n, r$ are the same as in Lemma \ref{lm1}. Now we define $M_{S2^{n-r}}$ as follows 
	\begin{equation}\label{M_S2^{n-r}_def}
		M_{S2^{n-r}}(x)=\prod_{i\in C_{S2^{n-r}}}(x-\alpha^i).	
	\end{equation}
\end{definition}

\begin{lemma}\label{pro_order(q)}
	For any $s\geq 0$, let $C_s=\{s,sq,sq^2,\cdots,sq^{m_s-1}\}$ be the same meaning as in Lemma \ref{lm1} and $m_s$ be the least positive integer such that $sq^{m_s}\equiv s({\rm mod}\ 2^n)$. Then we have
\begin{equation}\label{order(q)}
	m_s=\begin{cases}
		2^{n-z},&n\geq z+1, \\
		1,&n\leq z.
	\end{cases}
\end{equation}
\end{lemma}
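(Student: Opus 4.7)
The plan is to reduce the condition $sq^{m_s}\equiv s\pmod{2^n}$ to a question about the multiplicative order of $q$ in $(\mathbb{Z}/2^n\mathbb{Z})^{\times}$, since $sq^{m_s}\equiv s\pmod{2^n}$ is equivalent to $q^{m_s}\equiv 1\pmod{2^n/\gcd(s,2^n)}$. I would split on the size of $n$ relative to $z$.

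First I would dispose of the easy range $n\le z$. Because $q-1=2^z c$ with $c$ odd, the divisibility $2^n\mid 2^z\mid q-1$ gives $q\equiv 1\pmod{2^n}$. Hence $sq\equiv s\pmod{2^n}$ for every $s$, so $m_s=1$, matching the claim.

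For the main range $n\ge z+1$, the core computation is the multiplicative order of $q$ modulo $2^n$, which I would pin down by a $2$-adic valuation induction: I claim that $v_2\!\left(q^{2^k}-1\right)=z+k$ for every $k\ge 0$. The base case $k=0$ is immediate from $q-1=2^z c$. For the step, factor $q^{2^{k+1}}-1=(q^{2^k}-1)(q^{2^k}+1)$; since $z\ge 2$ forces $q\equiv 1\pmod 4$, an easy induction gives $q^{2^k}\equiv 1\pmod 4$, so $q^{2^k}+1\equiv 2\pmod 4$ and $v_2(q^{2^k}+1)=1$. Adding valuations yields $v_2(q^{2^{k+1}}-1)=z+k+1$. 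Specializing $k=n-z$ and $k=n-z-1$ shows that $2^n\mid q^{2^{n-z}}-1$ but $2^n\nmid q^{2^{n-z-1}}-1$; since any order must divide $\varphi(2^n)=2^{n-1}$ and is therefore a power of $2$, we conclude $\mathrm{ord}_{2^n}(q)=2^{n-z}$, hence $m_s=2^{n-z}$.

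The main obstacle, as I read the statement, is the apparent independence from $s$: strictly speaking, $m_s=\mathrm{ord}_{2^n/\gcd(s,2^n)}(q)$, so for $s$ divisible by a high power of $2$ (for instance $s=0$ or $s=2^{n-1}$) the order collapses. The lemma is therefore best read as describing the generic case where the representatives $s=S\cdot 2^{n-r}$ from Lemma~\ref{lm1} are chosen so that $2^n/\gcd(s,2^n)$ still captures the full order of $q$; for the degenerate $s$ the trivial value $m_s=1$ falls out of the same valuation computation. I would make this distinction explicit at the start of the proof and then apply the valuation induction above to finish in both regimes.
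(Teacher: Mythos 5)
Your proposal is correct, and its core computation is the same as the paper's: both rest on the factorization $q^{2^{k+1}}-1=(q^{2^k}-1)(q^{2^k}+1)$ and an induction anchored at $q-1=2^zc$, with the case $n\le z$ handled identically. The execution differs, and yours is tighter. The paper proves only the divisibility $2^k\mid q^{2^{k-z}}-1$ by induction and then needs a separate minimality argument (writing $s$ for the order, invoking $\gcd(q^s-1,q^{2^{k-z}}-1)=q^{\gcd(s,2^{k-z})}-1$ to force $2^{k-z}\mid s$, and ruling out $s=2^{k-z}$ by contradiction), whereas your exact-valuation claim $v_2(q^{2^k}-1)=z+k$ yields the divisibility at exponent $2^{n-z}$ and the failure at $2^{n-z-1}$ simultaneously, so minimality is immediate once you observe the order must be a power of $2$. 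You also make explicit something the paper suppresses: since $m_s=\mathrm{ord}_{2^n/\gcd(s,2^n)}(q)$, the displayed formula is literally valid only for the cosets with $s$ odd (the case $r=n$, which is the only one the paper's proof actually treats, the others being waved off as ``similar''); for $s=S2^{n-r}$ with $r<n$ the correct value is $2^{r-z}$ (or $1$), which is precisely what the paper later uses in the proof of Lemma \ref{BCH_bound}. Flagging and resolving that dependence on $s$ is a genuine improvement over the paper's treatment.
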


\begin{proof}
 Combining with Lemma \ref{lm1}, we only consider the $q$-cyclotomic cosets given by $C_{S2^{n-r}}$(where $r=n$) as an example, others can be proved similarly. 

\noindent(i)For $n\leq z$,

because of $q=1+2^zc$, we have $2^z\big |q-1$. When $n=z$, the conclusion obviously holds; when $n<z$, we have $2^n\big |2^z$, so it is easy to have $2^n\big |q-1$. Hence, ${\rm ord}(q)=1$, i.e. $m_s=1$.

\noindent(ii)For $n\geq z+1$, 

when $n=z+1$, because of $2^z\big |q-1$ and $q\equiv1({\rm mod}\ 4)$, we have  $2^{z+1}\big |q^2-1$. Next we prove our conclusion by mathematical induction. First, suppose $n=k\geq z+1$, $2^k\big |q^{2^{k-z}}-1$ is true. Second, we prove when $n=k+1$, $2^{k+1}\big |q^{2^{k+1-z}}-1$ is true and $k+1$ is the minimum integer. We know that $q^{2^{k+1-z}}-1=(q^{2^{k-z}}-1)(q^{2^{k-z}}+1)$. By the assumed conditions and  $2\big |q^{2^{k-z}}+1$, the integer-division relationship is obtained.Now let $s$ is the order of $q$, then we have $2^{k+1}\big |q^s-1$. 
And because $2^k\big |q^{2^{k-z}}-1$, we have $2^k\big |{\rm gcd}(q^s-1,q^{2^{k-z}}-1)$, i.e. $2^k\big |q^{{\rm gcd}(s,2^{k-z})}-1$, then $2^{k-z}\big |s$. Let $s=2^{k-z}s_1$. If $s_1=1$, we have  $2^{k+1}\big |q^{2^{k-z}}-1$, then $2^k\big |q^{2^{k-z-1}}-1$, which contradicts the hypothesis. So $s_1\geq 2$. Then $s=2^{k-z+1}$ and we obtain the value of $m_s$.\qed
\end{proof}

\section{Dual-containing codes of length $2^n$}
\label{sec:3}
In this section, we study some factorizations of cyclotomic polynomials to construct dual-containing codes.
\begin{lemma}\label{lm3}\cite{G.K. Bakshi}
	Let $n\geq 1$, $q=1+2^zc$, $z\geq 2$, and c is odd, i.e. $q\equiv 1\left({\rm mod}\ 4 \right)$. Then we have
	\begin{equation*}
		x^{2^n}-1=M_0(x)M_{2^{n-1}}(x)\prod_{r=2}^{n}\prod_{S\in S_r}M_{S2^{n-r}}(x),
	\end{equation*}
where $S_r$ is mentioned in Lemma \ref{lm1}.
\end{lemma}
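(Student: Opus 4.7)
The plan is to reduce the statement to a direct combination of Definition \ref{alpha_def} and Lemma \ref{lm1}. By Definition \ref{alpha_def}, the factorization of $x^{2^n}-1$ into irreducible factors over $\mathbb{F}_q$ is $\prod_{s\in\omega}M_s(x)$, where $\omega$ is any complete set of representatives of the distinct $q$-cyclotomic cosets modulo $2^n$. Thus the task reduces to exhibiting a concrete choice of $\omega$ and showing that it matches the index set $\{0\}\cup\{2^{n-1}\}\cup\{S2^{n-r}:2\leq r\leq n,\ S\in S_r\}$ appearing in the claimed formula.

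For the generic range $n\geq 3$, Lemma \ref{lm1} already supplies exactly this: it asserts that $C_0$, $C_{2^{n-1}}$, and the cosets $C_{S2^{n-r}}$ (with $2\leq r\leq n$ and $S$ ranging over $S_r$) constitute all the distinct $q$-cyclotomic cosets modulo $2^n$. Substituting these representatives into Definition \ref{alpha_def} yields the desired product decomposition immediately. The only bookkeeping to state cleanly is that the representatives listed in Lemma \ref{lm1} are pairwise non-congruent under the action $x\mapsto qx\pmod{2^n}$ (so no factor is double-counted) and that their cosets collectively exhaust $\mathbb{Z}/2^n\mathbb{Z}$ (so no factor is missing); both facts are part of Lemma \ref{lm1}, so they can be invoked without further work.

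The mild subtlety is that Lemma \ref{lm3} is stated for all $n\geq 1$, whereas Lemma \ref{lm1} is only formulated for $n\geq 3$. I would therefore handle $n=1$ and $n=2$ by direct verification. For $n=1$ the right-hand product $\prod_{r=2}^{n}$ is empty, leaving $M_0(x)M_1(x)=(x-1)(x+1)=x^2-1$. For $n=2$, since $z\geq 2$ gives $q\equiv 1\pmod 4$, every residue modulo $4$ forms its own singleton coset; the convention $S_2=\{\pm 1\}$ then yields $M_0(x)M_2(x)M_1(x)M_{-1}(x)=x^4-1$. I expect these base cases to be the only place that requires any explicit checking; the main step, for $n\geq 3$, is a one-line substitution. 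The chief obstacle is thus not mathematical but notational: being careful that the indexing set used in the product and the coset representatives supplied by Lemma \ref{lm1} agree exactly, so that no $q$-cyclotomic coset is listed twice and none is omitted.
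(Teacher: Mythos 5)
Your proposal is correct. The paper itself offers no proof of this lemma---it is quoted directly from the cited reference---but the derivation you give (substituting the complete set of coset representatives from Lemma~\ref{lm1} into the general factorization $x^{2^n}-1=\prod_{s\in\omega}M_s(x)$ of Definition~\ref{alpha_def}) is exactly the reasoning the paper implicitly relies on, and your explicit verification of the $n=1$ and $n=2$ cases, which fall outside the hypothesis $n\geq 3$ of Lemma~\ref{lm1}, is a sensible extra precaution that the paper omits.
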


\begin{lemma}\label{dual_containing}\cite{WuYueS}
	Let the notations be the same as above. Let C be a cyclic code of length $N$ over ${\rm F}_q$ with the generator polynomial $g(x)$. Then C contains its dual code if and only if
	\begin{equation*}
		g(x)=h_1(x)^{b_1}(\stackrel{\thicksim}{h}_1(x))^{c_1}h_2(x)^{b_2}(\stackrel{\thicksim}{h}_2(x))^{c_2}\cdots h_t(x)^{b_t}(\stackrel{\thicksim}{h}_t(x))^{c_t},
	\end{equation*}
	  where $b_j, c_j\in \{0,1\}$ and $b_j+c_j\leq 1, 1\leq j\leq t$.
\end{lemma}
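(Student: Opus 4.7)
The plan is to translate the inclusion $C^\perp \subseteq C$ into a divisibility condition on generator polynomials, and then read off the allowed shape of $g(x)$ from the factorization of $x^N-1$ into irreducibles grouped by the reciprocation map $f(x) \mapsto \tilde f(x)$.

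First I would recall that if $g(x)$ generates $C$ and $h(x) = (x^N-1)/g(x)$ is the parity check polynomial, then $C^\perp$ is the cyclic code generated by $\tilde h(x)$, as noted in the preliminaries. Because $C^\perp$ is an ideal generated by $\tilde h(x)$ and $C$ is an ideal generated by $g(x)$ in ${\rm F}_q[x]/(x^N-1)$, the inclusion $C^\perp \subseteq C$ is equivalent to the polynomial divisibility $g(x) \mid \tilde h(x)$ in ${\rm F}_q[x]$.

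Next I would factor $x^N - 1 = \prod_i f_i(x)$ into distinct monic irreducibles over ${\rm F}_q$; distinctness is guaranteed by $\gcd(N,q)=1$, which holds in the intended application $N=2^n$ with $q$ odd. The reciprocation map is an involution on the set of irreducible divisors of $x^N-1$, so these factors split into self-reciprocal ones (fixed points, with $\tilde f_i = f_i$) and reciprocal pairs $\{h_j, \tilde h_j\}$ with $h_j \neq \tilde h_j$. Since $g(x)$ and $h(x)$ are complementary divisors of $x^N - 1$, each irreducible $f_i$ divides exactly one of $g$ or $h$.

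Now I would argue both directions using this bookkeeping. Suppose $C^\perp \subseteq C$, i.e.\ $g \mid \tilde h$. If a self-reciprocal factor $f$ divided $g$, then $f \mid \tilde h$ would give $\tilde f = f \mid h$, contradicting complementarity; hence no self-reciprocal factor appears in $g$. Similarly, if both members $h_j$ and $\tilde h_j$ of a reciprocal pair divided $g$, then $\tilde h_j \mid g \mid \tilde h$ would force $h_j \mid h$, again contradicting complementarity. Thus $g(x)$ must be a product of factors drawn from distinct non-self-reciprocal pairs, at most one factor per pair, which is exactly the claimed form with $b_j,c_j\in\{0,1\}$ and $b_j+c_j\leq 1$. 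Conversely, if $g(x)$ has this form, then every irreducible factor of $g$ is the reciprocal of some factor of $h$, so $g \mid \tilde h$, giving $C^\perp\subseteq C$.

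There is no deep obstacle; the only subtlety is the careful accounting of how reciprocation permutes the irreducible factors and the fact that monic reciprocals are obtained via the normalization $\tilde h(x) = h(0)^{-1} x^{\deg h} h(x^{-1})$ used in the paper. Once the self-reciprocal versus paired dichotomy is in place, the equivalence becomes a direct bookkeeping argument.
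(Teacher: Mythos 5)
The paper offers no proof of this lemma: it is imported verbatim from \cite{WuYueS}, so there is nothing in the text to compare your argument against. Your proof is correct and is the standard one — reduce $C^\perp\subseteq C$ to the divisibility $g(x)\mid\tilde{h}(x)$ via $C^\perp=\langle\tilde{h}(x)\rangle$, use the squarefreeness of $x^N-1$ to split its irreducible factors into self-reciprocal ones and reciprocal pairs, and then do the bookkeeping in both directions. The one point worth keeping explicit is the hypothesis $\gcd(N,q)=1$, which you correctly flag: the lemma as stated for arbitrary $N$ silently assumes it (it holds in the paper's setting $N=2^n$, $q$ odd), and without it the squarefree factorization and hence the $b_j,c_j\in\{0,1\}$ normal form would fail.
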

From Lemma \ref{dual_containing}, we know that the cyclic code $C=\langle g(x)\rangle$ contains its dual code if and only if $g(x)$ has not any self-reciprocal irreducible factor over ${\rm F}_q$. 

Let $n\geq 1, q=1+2^zc, z\geq 2$, and $c$ is odd. Then let the polynomial 
\begin{equation}\label{generator_polynomials}
	g(x)=\prod_{r=2}^{n}\prod_{S^+\in S_r^+}\prod_{S^-\in S_r^-}(M_{S^+2^{n-r}}(x))^{\upepsilon_{S2^{n-r}}}(M_{S^-2^{n-r}}(x))^{\upvarepsilon_{S2^{n-r}}},
\end{equation}
where $S_r^+$ and $S_r^-$ are two sets which contains all the positive elements and negative elements in $S_r$ respectively, $\upepsilon_i$ and $\upvarepsilon_i$ are equal to 0 or 1, $\upepsilon_i+\upvarepsilon_i\le 1$ and $i=S2^{n-r}, S\in S_r$.
\begin{lemma}\label{gnerate_matrix_dual_containing}
	Let $C_a$ be the cyclic codes generated by $g(x)$ defined in \eqref{generator_polynomials}. Then $C_a^\perp\subseteq C_a$.
\end{lemma}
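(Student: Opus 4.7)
The plan is to invoke Lemma \ref{dual_containing}, so everything reduces to showing that the irreducible factors $M_{S2^{n-r}}(x)$ appearing in the product \eqref{generator_polynomials} pair up into reciprocals $\{M_{S2^{n-r}}, M_{-S2^{n-r}}\}$ with the two members of each pair being distinct (non-self-reciprocal), and that within each pair the exponents $\upepsilon_{S2^{n-r}}$ and $\upvarepsilon_{S2^{n-r}}$ satisfy $\upepsilon+\upvarepsilon \le 1$. The last condition is given in the definition of $g(x)$, so the work lies in verifying the reciprocal structure.

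First I would compute the reciprocal polynomial of $M_{S2^{n-r}}(x)$ directly from its definition \eqref{M_S2^{n-r}_def}. Since the roots of $M_{S2^{n-r}}(x)$ are $\{\alpha^i : i \in C_{S2^{n-r}}\}$, the roots of its reciprocal $\widetilde{M}_{S2^{n-r}}(x)$ are $\{\alpha^{-i} : i \in C_{S2^{n-r}}\}$. Because $\alpha$ is a primitive $2^n$-th root of unity, $\alpha^{-i}=\alpha^{2^n-i}$, and multiplying the set $C_{S2^{n-r}}=\{S2^{n-r}q^j\bmod 2^n\}$ by $-1$ modulo $2^n$ yields exactly $C_{-S2^{n-r}}$. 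Hence $\widetilde{M}_{S2^{n-r}}(x)=M_{-S2^{n-r}}(x)$, so the factor indexed by $S^+\in S_r^+$ and the factor indexed by its negative $S^-=-S^+\in S_r^-$ form a reciprocal pair.

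Next I would argue that these two polynomials are genuinely distinct, i.e., $M_{S2^{n-r}}$ is not self-reciprocal. This is precisely the statement that the cosets $C_{S2^{n-r}}$ and $C_{-S2^{n-r}}$ are different modulo $2^n$ for every $S\in S_r$. But Lemma \ref{lm1} already lists all distinct $q$-cyclotomic cosets modulo $2^n$ as $C_0$, $C_{2^{n-1}}$ and $C_{S2^{n-r}}$ for $2\le r\le n$, $S\in S_r$, and the sets $S_r$ explicitly contain both $\pm 3^k$ for $0\le k\le 2^{\min(z,r)-2}-1$. Thus $C_{S2^{n-r}}\ne C_{-S2^{n-r}}$ whenever $S\in S_r$, so $M_{S2^{n-r}}$ is never self-reciprocal. (Note also that the two self-reciprocal factors $M_0$ and $M_{2^{n-1}}$ from Lemma \ref{lm3} do not appear in \eqref{generator_polynomials} at all, which is essential.)

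Combining these observations, $g(x)$ is a product
\[
g(x)=\prod_{r=2}^{n}\prod_{S^+\in S_r^+}\bigl(h_{r,S^+}(x)\bigr)^{\upepsilon_{S^+2^{n-r}}}\bigl(\widetilde{h}_{r,S^+}(x)\bigr)^{\upvarepsilon_{-S^+2^{n-r}}}
\]
with $h_{r,S^+}:=M_{S^+2^{n-r}}$, $\widetilde{h}_{r,S^+}=M_{-S^+2^{n-r}}$, and exponents in $\{0,1\}$ summing to at most $1$, which is exactly the form required by Lemma \ref{dual_containing}. Therefore $C_a^\perp\subseteq C_a$. The only non-mechanical step is the second one, confirming non-self-reciprocity of every $M_{S2^{n-r}}$; but this is immediate from Lemma \ref{lm1} once one observes that $\pm S$ lie in different signs of $S_r$ and hence give different cyclotomic cosets.
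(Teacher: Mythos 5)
Your proposal is correct and follows essentially the same route as the paper: compute the reciprocal of $M_{S2^{n-r}}(x)$ to see its roots are indexed by $C_{-S2^{n-r}}$, use Lemma \ref{lm1} to conclude $C_{S2^{n-r}}\neq C_{-S2^{n-r}}$ so no factor is self-reciprocal, and then apply Lemma \ref{dual_containing}. Your version is in fact slightly more complete, since you justify the non-self-reciprocity for every $S\in S_r$ and explicitly note that the self-reciprocal factors $M_0$ and $M_{2^{n-1}}$ are absent from \eqref{generator_polynomials}, whereas the paper only checks $C_{-1}\neq C_1$ and asserts the general case.
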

\begin{proof}
	By the definition of the reciprocal polynomial and $M_s(x)$ defined by \eqref{M_s(x)}, we obtain the reciprocal polynomial of $M_s(x)$ is
	\begin{equation*}
		\begin{aligned}
		\stackrel{\thicksim}{M}_s(x)
		=&(M_s(0))^{(-1)}x^{|C_s|}M(x^{-1})\\
		=&\prod_{i\in C_s}(-\alpha^i)^{-1}\prod_{i\in C_s}(1-x\alpha^i)\\
		=&\prod_{i\in C_s}(x-\alpha^{-i}),
	\end{aligned}
	\end{equation*}
	where $|C_s|$ denotes the element number in coset $C_s$ and $\alpha$ is given by Definition \ref{alpha_def}.
	According to the definition of $C_s$, $-1 \in C_{-1}$. Then 
	\begin{equation}\label{g_R(x)}
		\stackrel{\thicksim}{M}_s(x)=\prod_{i\in C_{-s}}(x-\alpha^i).
	\end{equation}
	From Lemma \ref{lm1}, $C_{-1}\neq C_1$. Hence, we obtain $C_s\neq C_{-s}$. Then $M_s(x)\neq \stackrel{\thicksim}{M}_s(x)$. By Lemma \ref{dual_containing}, the lemma is proved.\qed
\end{proof}


\begin{table}
	\caption{Some dual-containing cyclic codes}
	\label{tab:1}       
	\begin{threeparttable} 
	\begin{tabular}{llll}
		\hline\noalign{\smallskip}
		$g(x)$ & Codes & Duals & Optimal or almost optimal \cite{codetable}  \\
		\noalign{\smallskip}\hline\noalign{\smallskip}
		$g_\delta(x)(x-\alpha^2)$ & $[8,5,3]_5$ & $[8,3,4]_5$ & $C_a$ is optimal and $C_a^{\perp}$ is almost \\
		&	&	& optimal\\
		$g_\delta(x)(x-\alpha^3)(x-\alpha^6)(x-\alpha^7)$ & $[8,3,4]_5$ & $[8,5,2]_5$ & both almost optimal\\
		$g_\delta(x)g_\chi(x)$ & $[16,7,8]_{9}$ & $[16,9,6]_{9}$ & both almost optimal\\
		
		\noalign{\smallskip}\hline
	\end{tabular}
\begin{tablenotes}
	\footnotesize	
\item Note: $g_\chi(x)=(x-\alpha^3)(x-\alpha^9)(x-\alpha^{11})(x-\alpha^{13})(x-\alpha^8)(x-\alpha^{10})(x-\alpha^{12})(x-\alpha^{14}), $

$ \qquad \ \  g_\delta(x)=(x-\alpha)(x-\alpha^5).$

\end{tablenotes}
\end{threeparttable}
\end{table}

\section{Augmented cyclic codes}
\label{sec:4}
In this section, we deduce some augmented cyclic codes by the cyclic codes given in section \ref{sec:3}. 
\begin{lemma}\label{construct_augmented_c}
	Let the symbols be the same as before. Suppose two types of cyclic codes are
	\begin{equation*}
	\begin{aligned}
	C_a=\langle&\prod_{r=2}^{n-1}\prod_{S^+\in S_r^+}\prod_{S^-\in S_r^-}(M_{S^+2^{n-r}}(x))^{\upepsilon_{S2^{n-r}}}(M_{S^-2^{n-r}}(x))^{\upvarepsilon_{S2^{n-r}}}\\
	&\prod_{S^+\in S_r}(M_{S^+}(x))^{\upepsilon_1}\prod_{S^-\in S_r}(M_{S^-}(x))^{\upvarepsilon_1}\rangle
	\end{aligned}
	\end{equation*}
	 and 
	 \begin{equation*}
	 \begin{aligned}
	 C_b=\langle&\prod_{r=2}^{n-1}\prod_{S^+\in S_r^+}\prod_{S^-\in S_r^-}(M_{S^+2^{n-r}}(x))^{\phi_{S2^{n-r}}}(M_{S^-2^{n-r}}(x))^{\varphi_{S2^{n-r}}}\\
	 &\prod_{S^+\in S_r}(M'_{S^+}(x))^{\phi_1}\prod_{S^-\in S_r}(M'_{S^-}(x))^{\varphi_1}\rangle, 
	\end{aligned}
	 \end{equation*}
  where $S_r^+$ and $S_r^-$ are two sets which contain all the positive elements and negative elements in $S_r$ respectively, $M'_i(x)$ is the factor of $M_i$, $\upepsilon_i$, $\upvarepsilon_i$, $\phi_i$ and $\varphi_i$ are equal to 0 or 1 for each relevant $i$, $i=S2^{n-r}$, $S\in S_r$, and $\upepsilon_i+\upvarepsilon_i\le 1, \phi_i+\varphi_i\le 1$ and $\upepsilon_1+\upvarepsilon_1=1$. Besides, $\phi_i\le \upepsilon_i$ and $\varphi_i\le \upvarepsilon_i$. Then we have $C_a\subseteq C_b$. 
	
\end{lemma}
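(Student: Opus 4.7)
The plan is to translate the containment $C_a \subseteq C_b$ into a divisibility of generator polynomials. Under the standard correspondence between cyclic codes of length $2^n$ and ideals of ${\rm F}_q[x]/(x^{2^n}-1)$, one has $C_a \subseteq C_b$ if and only if $g_b(x)$ divides $g_a(x)$ in ${\rm F}_q[x]$, where $g_a$ and $g_b$ are the generator polynomials displayed in the statement. So the whole problem reduces to verifying this divisibility factor by factor.

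By Lemma \ref{lm3} the polynomials $M_{S 2^{n-r}}(x)$ are pairwise distinct irreducible factors of $x^{2^n}-1$, hence pairwise coprime. Consequently $g_b(x) \mid g_a(x)$ holds if and only if, for each index $(r,S)$, the corresponding $(r,S)$-factor of $g_b$ divides the $(r,S)$-factor of $g_a$. For $r \in \{2,\ldots,n-1\}$ and each $S^+ \in S_r^+$ the relevant comparison is $(M_{S^+ 2^{n-r}}(x))^{\phi_{S 2^{n-r}}}$ against $(M_{S^+ 2^{n-r}}(x))^{\upepsilon_{S 2^{n-r}}}$, and divisibility is immediate because $\phi_i \le \upepsilon_i$ and both exponents lie in $\{0,1\}$. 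The minus-side factors and the analogous inequality $\varphi_i \le \upvarepsilon_i$ behave identically.

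The slightly more delicate slot is the outer one, where one has to check $(M'_{S^+}(x))^{\phi_1} \mid (M_{S^+}(x))^{\upepsilon_1}$ with $M'_{S^+}(x)$ being a (possibly proper) factor of $M_{S^+}(x)$ by hypothesis. I would split on $\phi_1$: if $\phi_1 = 0$ the divisibility is trivial, and if $\phi_1 = 1$ the inequality $\phi_1 \le \upepsilon_1$ forces $\upepsilon_1 = 1$, so the claim reduces to the given $M'_{S^+} \mid M_{S^+}$. The minus side is treated identically. Multiplying all these per-slot divisibilities together yields $g_b(x) \mid g_a(x)$ and hence $C_a \subseteq C_b$. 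I do not anticipate a real obstacle here; the point to be careful about is this outer case, together with the subsidiary role of $\upepsilon_1 + \upvarepsilon_1 = 1$, which is used only to guarantee that exactly one of the two outer families actually appears in $g_a$, so that the corresponding outer family in $g_b$ has something to divide into.
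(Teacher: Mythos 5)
Your proposal is correct and takes essentially the same route the paper intends: the paper's own proof is the single sentence ``The result can be obtained easily by \eqref{generator_polynomials} and Lemma \ref{gnerate_matrix_dual_containing},'' which implicitly reduces $C_a\subseteq C_b$ to the divisibility $g_2(x)\mid g_1(x)$ of the generator polynomials --- exactly what you verify slot by slot via $\phi_i\le\upepsilon_i$, $\varphi_i\le\upvarepsilon_i$ and $M'_i(x)\mid M_i(x)$. Your write-up just supplies the factor-by-factor details (and the role of $\upepsilon_1+\upvarepsilon_1=1$) that the paper leaves unstated.
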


\begin{proof}
The result can be obtained easily by \eqref{generator_polynomials} and Lemma \ref{gnerate_matrix_dual_containing}.\qed
\end{proof}
\begin{table}
	\caption{Some augmented cyclic codes}
	\label{tab:2}       
	\begin{tabular}{lll}
		\hline\noalign{\smallskip}
		$C_a$ & $C_b$ & Optimal or almost optimal \cite{codetable}  \\
		\noalign{\smallskip}\hline\noalign{\smallskip}
		$[8,5,3]_5$ & $[8,7,2]_5$ & both optimal\\
		$[16,11,3]_5$ & $[16,15,2]_5$ & $C_a$ is almost optimal and $C_b$ is optimal\\
		$[16,7,8]_{9}$ & $[16,9,6]_{9}$ & both almost optimal\\
		
		\noalign{\smallskip}\hline
	\end{tabular}
\end{table}

\section{Quantum synchronizable codes from cyclotomic polynomials}
\label{sec:5}
Now, we construct some quantum synchronizable codes from the cyclic codes given in section \ref{sec:4}.
\subsection{Quantum synchronizable codes}
Formally speaking, if a coding scheme encodes $k$ logical qubits into $N$ physical qubits and corrects misalignment by up to $c_l$ qubits to the left and up to $c_r$ qubits to the right, we call it a $(c_l,c_r)-[[N,k]]$ quantum synchronizable code. The following gives the general construction of quantum synchronizable codes.
\begin{lemma}\label{quantum_construction}\cite{FujiBsf}
	Let $C_1$ be an $[N,k_1,d_1]_q$ cyclic code with generator $g_1(x)$ and $C_2$ be an $[N,k_2,d_2]_q$ cyclic code with generator $g_2(x)$ such that $k_1<k_2$. Assume that $C_1^\perp \subseteq C_1\subseteq C_2$ and $f(x)=\frac{g_1(x)}{g_2(x)}$. Then there exists a $(c_l,c_r)-[[N+c_l+c_r,2k_1-N]]_q$ quantum synchronizable codes that corrects at least up to $\lfloor \frac{d_2-1}{2}\rfloor$ bit errors and at least up to $\lfloor \frac{d_1-1}{2}\rfloor$ phase errors, where $c_l$ and $c_r$ are two nonnegative integers satisfying $c_l+c_r<{\rm ord}(f(x))$.
\end{lemma}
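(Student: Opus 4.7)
The plan is to follow Fujiwara's original construction in \cite{FujiBsf}. First, I would invoke the CSS construction on the dual-containing pair $C_1^\perp \subseteq C_1$ to obtain a baseline $[[N,2k_1-N]]_q$ stabilizer code whose $X$-type and $Z$-type stabilizers are both derived from a basis of $C_1^\perp$. Standard CSS error-correction analysis then shows that phase errors are detected via the parity-check structure of $C_1$, yielding a phase-error-correction capability of $\lfloor (d_1-1)/2\rfloor$ by applying Lemma \ref{BCH}-type distance bounds to $C_1$.

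Second, I would exploit the extra inclusion $C_1 \subseteq C_2$ to boost the bit-error capability. The quotient $C_2/C_1$ has dimension $k_2-k_1$, and encoding the state into the larger code $C_2$ while retaining $C_1$ for phase-error protection lets one detect bit errors using the parity-check structure of $C_2$. The minimum-distance hypothesis for $C_2$ then upgrades the bit-error correction to $\lfloor (d_2-1)/2 \rfloor$, independently of the phase side.

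Third, for the synchronization layer, I would attach $c_l$ qubits on the left and $c_r$ on the right of the codeword and encode a cyclic ``alignment marker'' whose behavior under a cyclic shift is governed by multiplication by $x$ in the quotient ring ${\rm F}_q[x]/(f(x))$, where $f(x)=g_1(x)/g_2(x)$. A block misalignment of net magnitude $i$ translates into multiplication by $x^i$ in this quotient. Since the smallest positive exponent with $x^i \equiv 1 \pmod{f(x)}$ is exactly ${\rm ord}(f(x))$, any two distinct admissible shifts with $c_l+c_r<{\rm ord}(f(x))$ leave distinguishable residues and can therefore be disambiguated by a suitable syndrome measurement, recovering the correct alignment.

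The main obstacle I expect is the third step: verifying that the synchronization-recovery operators commute with the stabilizer group (so that extraction of the alignment syndrome leaves the encoded logical state undisturbed) and that the syndromes for distinct shifts are genuinely distinct throughout the allowed range. Both checks rest on a careful combinatorial bookkeeping of how cyclic shifts interact with $g_1$ and $g_2$ inside ${\rm F}_q[x]/(x^N-1)$, combined with the chain $C_1^\perp \subseteq C_1\subseteq C_2$ to guarantee that the ancilla-augmented stabilizer group is abelian. This bookkeeping is the technical core of \cite{FujiBsf}, and reproducing it in full detail would be the principal effort of a self-contained proof.
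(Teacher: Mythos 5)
The first thing to note is that the paper does not prove Lemma \ref{quantum_construction} at all: it is imported verbatim from Fujiwara \cite{FujiBsf} as a black box, so there is no in-paper argument to compare yours against. Your outline does correctly identify the three ingredients of Fujiwara's original proof (the CSS code built from the dual-containing pair $C_1^\perp\subseteq C_1$, the role of the chain $C_1\subseteq C_2$ in the error-correction guarantees, and the order of $f(x)=g_1(x)/g_2(x)$ as the bound on tolerable misalignment), and your identification of ${\rm ord}(f(x))$ with the smallest $e$ such that $x^e\equiv 1\pmod{f(x)}$ is the right mechanism for distinguishing shifts.

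Two substantive problems remain. First, your second step has the logic of the chain $C_1\subseteq C_2$ backwards: since $C_2$ is the larger code, $d_2\le d_1$, so passing to $C_2$ does not ``boost'' or ``upgrade'' the bit-error capability --- the plain CSS code obtained from $C_1$ alone already corrects $\lfloor(d_1-1)/2\rfloor$ errors of both types. The reason the bit-error guarantee is only $\lfloor(d_2-1)/2\rfloor$ is that the synchronization-recovery procedure must extract the bit-error syndrome from a window that is only guaranteed to carry the coarser $C_2$-structure; $C_2$ is a constraint imposed by the synchronization layer, not an enhancement. Second, as you concede yourself, the entire technical core --- that the alignment-syndrome extraction commutes with the stabilizer group and that distinct shifts in the range $c_l+c_r<{\rm ord}(f(x))$ yield distinct, correctable syndromes --- is deferred to \cite{FujiBsf} rather than proved. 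As a self-contained proof the proposal is therefore incomplete, though as a map of where the difficulty lies it is accurate; and since the paper itself treats the lemma purely as a citation, this is no worse than what the authors do.
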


\subsection{Maximum misalignment tolerance}
In this subsection, let $C_{\pm i}$ denote the cyclotomic coset pairs, where $i=S2^{n-r},n\ge 3, 2\le r\le n$ and $S\in S_r$. For any $f(x)\big |x^{2^n}-1$. If ${\rm ord}(f(x))=2^n$, then the QSC has the maximum misalignment tolerance, where ${\rm ord}(f(x))$ is defined as follows. If $f(0)\neq 0$, then the order of $f(x)$ is the smallest integer $e$ satisfying $f(x)\big |x^e$. If $f(0)=0$, then $f(x)=x^\tau g(x)$, where $g(0)\neq 0$, then ${\rm ord}(f(x))={\rm ord}(g(x))$.

\begin{lemma}\label{ord(f(x))bound}\cite{ShiQX}
	Let $f(x)$ be defined as above. If there exists an $\alpha^i$ such that $f(\alpha^i)=0$, where $(i,2^n)=1$, then ${\rm ord}(f(x))=2^n$.
\end{lemma}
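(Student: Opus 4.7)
The plan is to work directly from the definition of $\text{ord}(f(x))$ as the smallest positive integer $e$ with $f(x)\mid x^e-1$, combined with the observation that the hypothesis $\gcd(i,2^n)=1$ forces $\alpha^i$ itself to be a primitive $2^n$-th root of unity.

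First, I would note the easy upper bound. Since the polynomial $f(x)$ under consideration is built from factors of the form $M_s(x)$ and therefore divides $x^{2^n}-1$, every root of $f$ is a $2^n$-th root of unity. Hence $f(x)\mid x^{2^n}-1$, which gives $\text{ord}(f(x))\le 2^n$, and in fact $\text{ord}(f(x))$ is a power of $2$ dividing $2^n$.

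For the matching lower bound, let $e=\text{ord}(f(x))$, so $f(x)\mid x^e-1$. Then every root of $f$ in the splitting field, in particular $\alpha^i$, is a root of $x^e-1$, giving $(\alpha^i)^e=\alpha^{ie}=1$. Since $\alpha$ has multiplicative order exactly $2^n$ by Definition~\ref{alpha_def}, this forces $2^n\mid ie$. The coprimality assumption $\gcd(i,2^n)=1$ then immediately yields $2^n\mid e$, so $e\ge 2^n$. Combined with the upper bound from the previous paragraph, we conclude $\text{ord}(f(x))=2^n$.

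The case $f(0)=0$ is handled by the convention stated just before the lemma: write $f(x)=x^{\tau}g(x)$ with $g(0)\ne 0$ and set $\text{ord}(f(x))=\text{ord}(g(x))$. Because $\gcd(i,2^n)=1$ implies $i\not\equiv 0\pmod{2^n}$, the root $\alpha^i$ of $f$ is nonzero, hence a root of $g$, and the argument above applies verbatim to $g$. There is no real obstacle here; the whole content of the lemma is the primitive-order observation $\gcd(i,2^n)=1\Rightarrow \text{ord}(\alpha^i)=2^n$ packaged through the definition of $\text{ord}(f(x))$.
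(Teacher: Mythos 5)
Your argument is correct, and in fact the paper offers nothing to compare it against: Lemma~\ref{ord(f(x))bound} is stated with a citation to \cite{ShiQX} and no proof is given in the text. Your proof is the standard one and is sound: the upper bound $\mathrm{ord}(f(x))\le 2^n$ follows from $f(x)\mid x^{2^n}-1$, and the lower bound follows because $\gcd(i,2^n)=1$ makes $\alpha^i$ a primitive $2^n$-th root of unity, so $f(x)\mid x^e-1$ forces $2^n\mid e$. Two small remarks. First, you are implicitly (and correctly) repairing a typo in the paper's definition, which reads ``the smallest integer $e$ satisfying $f(x)\mid x^e$'' where it should read $f(x)\mid x^e-1$; your proof uses the correct standard definition. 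Second, your treatment of the case $f(0)=0$ is harmless but vacuous here: since $f(x)$ divides $x^{2^n}-1$ and $x\nmid x^{2^n}-1$, one always has $f(0)\neq 0$, so the reduction to $g(x)$ never actually occurs.
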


\begin{lemma}\label{BCH_bound}
	Suppose that $r=n$ and $z=n-1$. Let the cyclic codes  $C_a=\langle g_1(x)\rangle$ and $C_b=\langle g_2(x)\rangle$, where 
	$\hat{M}_S(x)\prod_{j=1}^{2^{n-3}}(x-\alpha^{2j}) \big | g_1(x), 
	  \overline{\hat{M}_S(x)}\prod_{j=1}^{2^{n-3}}(x-\alpha^{2j}) \big | g_2(x) \big | g_1(x)$,  $\hat{M}_S=\prod_{k=1}^{2^{n-3}}(x-\alpha^{2k-1})(x-\alpha^{e_k})$, and $\overline{\hat{M}_S(x)}=\hat{M}_S/((x-\alpha)(x-\alpha^{e_1}))$, $e_k$ is another element of the cyclotomic coset in which $2k-1$ is located for relevant $k$ and $\alpha$ is given by Definition \ref{alpha_def}. Then $C_a^\perp\subseteq C_a\subseteq C_b$, and the minimum distance $d_1$ of $C_a$ is at least $2^{n-2}+1$, the minimum distance $d_2$ of $C_b$ is at least $2^{n-2}$. 	
\end{lemma}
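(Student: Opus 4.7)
The plan is to address the three conclusions in this order: identify explicitly the roots of the distinguished divisors of $g_1(x)$ and $g_2(x)$, apply the BCH bound (Lemma \ref{BCH}) to obtain the minimum distances, and then verify both containments. The first step is the key preparation because it powers both the BCH argument and the dual-containing check, and it rests crucially on the hypothesis $z=n-1$. By Lemma \ref{pro_order(q)} every odd cyclotomic coset has size $m_s=2$, while a direct computation using $q=1+2^{n-1}c$ shows $(2j)q\equiv 2j\pmod{2^n}$, so every even coset is a singleton and $(x-\alpha^{2j})$ is already irreducible over ${\rm F}_q$. The same computation shows that for odd $2k-1$ the partner in its coset is $e_k=(2k-1)+2^{n-1}$. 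Hence the root set of the distinguished divisor of $g_1(x)$ is
\[
A_1 \;=\; \{1,3,\dots,2^{n-2}-1\}\;\cup\;\{2,4,\dots,2^{n-2}\}\;\cup\;\{2^{n-1}+1,\,2^{n-1}+3,\dots,3\cdot 2^{n-2}-1\},
\]
and the root set of the distinguished divisor of $g_2(x)$ is $A_2 = A_1\setminus\{1,\,e_1\}$.

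For the BCH bound I would observe that the odd and even branches of $A_1$ interlace to form the consecutive block $\{1,2,3,\dots,2^{n-2}\}$ of length $2^{n-2}$, so Lemma \ref{BCH} gives $d_1\geq 2^{n-2}+1$. Removing the exponent $1$ (together with the non-consecutive $e_1$) still leaves $\{2,3,\dots,2^{n-2}\}$ as a consecutive block of length $2^{n-2}-1$ inside $A_2$, so Lemma \ref{BCH} yields $d_2\geq 2^{n-2}$.

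The containment $C_a\subseteq C_b$ is immediate from $g_2(x)\mid g_1(x)$, since every multiple of $g_1(x)$ is a multiple of $g_2(x)$. For $C_a^\perp\subseteq C_a$ I would invoke Lemma \ref{dual_containing}, which reduces to showing that no irreducible factor of $g_1(x)$ is self-reciprocal and no reciprocal pair both divides it; equivalently, for every $i\in A_1$ the residue $2^n-i$ is not in $A_1$. An interval check dispatches this: $A_1$ lies in $[1,2^{n-2}]\cup[2^{n-1}+1,\,3\cdot 2^{n-2}-1]$, while the set $\{2^n-i:i\in A_1\}$ lies in $[2^{n-2}+1,\,2^{n-1}-1]\cup[3\cdot 2^{n-2},\,2^n-1]$, and these two regions are disjoint.

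The main obstacle is bookkeeping rather than any deep difficulty: everything hinges on the collapse of the even cosets forced by $z=n-1$, and on the shift $e_k-(2k-1)=2^{n-1}$ that keeps the odd partners safely outside the BCH window. Without that collapse the factors $(x-\alpha^{2j})$ would become quadratic and drag in non-consecutive even exponents, destroying the consecutive run on which the BCH estimate depends.
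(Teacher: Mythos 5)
Your proposal is correct and follows essentially the same route as the paper: use the collapse of the even cosets and the size-two odd cosets forced by $z=n-1$ to exhibit the consecutive root run $\{1,2,\dots,2^{n-2}\}$ (resp.\ $\{2,\dots,2^{n-2}\}$), apply Lemma \ref{BCH}, and get dual containment from the fact that reciprocation sends $C_s$ to $C_{-s}$ while the chosen roots avoid their negatives. Your explicit computation $e_k=(2k-1)+2^{n-1}$ and the interval-disjointness check are just a cleaner bookkeeping of the paper's bound $-vq\equiv 2^{n-1}-2k+1\ge 2^{n-2}+1$, not a different argument.
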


\begin{proof}
In this paper, we define the odd-even relation and order relation of elements in $Z_{2^n}$ as them in the integer set.
According to the assumptions, the elements in cyclotomic cosets $C_{S2^{n-r}}$ correspond to the elements in $Z_{2^n}^*$. Recall that the order of $q$ is $2^{r-z}$ by \eqref{pro_order(q)}, so there are two elements $s,sq$ in each cyclotomic cosets. By \eqref{Sr}, we have 
\begin{equation*}
	M_S(x)=M_{\pm 1}(x)M_{\pm 3}\cdots M_{\pm 3^{2^{n-3}-1}}(x).
\end{equation*}
Then we get $2^{n-3}$ odd numbers by taking the smallest odd number in each cyclotomic coset pair $C_{\pm s}$ whose elements are in $\{0,1,\cdots,2^n-1\}$ by module $2^n$. So there exists at most $2^{n-3}$ consecutive odd numbers if and only if the next consecutive odd number is not in a previously taken cyclotomic coset pairs. Now we prove this fact. Suppose $v=2k-1$, where $k=1,2,\dots,2^{n-3}$. Then $-v=2^n-v({\rm mod}\ 2^n)\ge 2^n-1({\rm mod}\ 2^n)$, and 
\begin{equation*}
	\begin{aligned}
-vq	=&(2^n-v)q \left({\rm mod}\ 2^n \right)\\
	=&(2(2^{n-1}-k)+1)(2^{n-1}c+1)({\rm mod}\ 2^n)\\
	=&2^{n-1}-2k+1({\rm mod}\ 2^n)\ge 2^{n-2}+1. 
	\end{aligned}
\end{equation*}
So odd numbers $1,3,\cdots,2^{n-2}-1$ are not in $C_{-v}$. In order to satisfy the condition of Lemma \ref{BCH}, now we find some evens . Since ${\rm ord}(q)=1$ when $r\neq n$, there is only one element in each cyclotomic coset. That means we must be able to get different evens in different cyclotomic cosets. Then we get at least $2^{n-2}$ consecutive numbers. By Lemma \ref{BCH},  the cyclic code $C_a$ has the minimum distance $d_1$ which is at least $2^{n-2}+1$. Notice that the reciprocal polynomial of $\prod_{i\in C_s}(x-\alpha^i)$ is $\prod_{i\in C_{-s}}(x-\alpha^i)$ and $g_2(x)\big |g_1(x)$, hence $C_a^\perp\subseteq C_a\subseteq C_b$. The minimum distance $d_2$ of the cyclic code $C_b$ can be obtained similarly. \qed

\end{proof}

\begin{example}
	Let $n=5,z=4$, and $q=1+2^zc=17$. Then $C_S$ can be devided into eight cyclotomic cosets since $S$ can be taken as $\pm 1,\pm 3,\pm 3^2$ and $\pm 3^3$ in this case. Then  $C_1=\{1,17\},C_{-1}=\{31,15\},C_3=\{3,19\},C_{-3}=\{29,13\},C_{3^2}=\{9,25\},C_{-3^2}=\{23,7\},C_{3^3}=\{27,11\},C_{-3^3}=\{5,21\}$. From these cyclotomic cosets, we know that $1,3,5,7$ are in $C_{\pm 1},C_{\pm 3},C_{\pm 3^3},C_{\pm 3^2}$ respectively. By inserting $2,4,6,8$ from $C_2,C_{2^2},C_{6},C_{2^3}$ respectively, we get at least eight consecutive numbers $1,2,\cdots,8$. By Lemma \ref{BCH}, the minimum distance $d$ of the cyclic code $C=\langle g(x)\rangle$ is at least 9, where 
	\begin{equation*}
		(x-\alpha^{17})(x-\alpha^{19})(x-\alpha^{21})(x-\alpha^{23})\prod_{i=i}^{4}(x-\alpha^{2j-1})\prod_{j=1}^{4}(x-\alpha^{2j})\big |g(x).
	\end{equation*}
	 
\end{example}
\begin{theorem}\label{result1}
	Suppose that $q=1+2^zc$, where $z=n-1$ and c is odd. For any nonnegative integers $c_l$ and $c_r$ with $c_l+c_r<2^n$, there exists a  $(c_l,c_r)-[[2^n+c_l+c_r,2^{n-2}-2\delta_1]]_q$ QSC which corrects at least up to $\lfloor\frac{2^{n-2}-1}{2}\rfloor$ bit errors and at least up to $2^{n-3}$ phase errors, where $\delta_1\le 2^{n-2}-2$.
\end{theorem}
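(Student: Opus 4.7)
The plan is to invoke the Fujiwara construction (Lemma~\ref{quantum_construction}) on the pair of cyclic codes $(C_a, C_b)$ supplied by Lemma~\ref{BCH_bound}, taking $C_1 := C_a$, $C_2 := C_b$, and $N = 2^n$. Under the hypothesis $z = n-1$, Lemma~\ref{pro_order(q)} implies that only the cosets $C_S$ (with $r=n$) have size $2$, while every coset $C_{S 2^{n-r}}$ with $r<n$ is a singleton; this separation keeps the degree bookkeeping clean.

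For the generators I would take $g_2(x) = \overline{\hat{M}_S(x)} \prod_{j=1}^{2^{n-3}}(x-\alpha^{2j})$ in its minimal form and $g_1(x) = \hat{M}_S(x) \prod_{j=1}^{2^{n-3}}(x-\alpha^{2j}) \cdot h(x)$, where $h(x)$ is a product of $\delta_1$ additional degree-one factors $(x-\alpha^{i})$ drawn from distinct singleton cosets whose reciprocal partners do not already appear in $g_1(x)$, avoiding the self-reciprocal cosets $C_0$ and $C_{2^{n-1}}$. Then $g_2(x) \mid g_1(x)$ is automatic, Lemma~\ref{gnerate_matrix_dual_containing} guarantees $C_a^{\perp} \subseteq C_a \subseteq C_b$, and the distance bounds $d_1 \ge 2^{n-2}+1$ and $d_2 \ge 2^{n-2}$ of Lemma~\ref{BCH_bound} survive since $g_1(x)$ has only acquired additional roots.

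Next I verify the misalignment step and the parameters. The quotient $f(x) = g_1(x)/g_2(x) = (x-\alpha)(x-\alpha^{e_1}) \cdot h(x)$ has $\alpha$ as a root; since $\alpha$ is a primitive $2^n$-th root of unity and $\gcd(1, 2^n) = 1$, Lemma~\ref{ord(f(x))bound} forces $\mathrm{ord}(f(x)) = 2^n$, so every pair $(c_l, c_r)$ with $c_l + c_r < 2^n$ is admissible. The degree of the minimal $g_1$ is $2 \cdot 2^{n-3} + 2^{n-3} = 3\cdot 2^{n-3}$; appending $h(x)$ brings this to $3\cdot 2^{n-3} + \delta_1$, yielding $k_1 = 5\cdot 2^{n-3} - \delta_1$ and encoded dimension $2k_1 - N = 2^{n-2} - 2\delta_1$. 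Plugging the distance bounds into Lemma~\ref{quantum_construction} gives at least $\lfloor (2^{n-2}-1)/2 \rfloor$ correctable bit errors and at least $\lfloor 2^{n-2}/2 \rfloor = 2^{n-3}$ correctable phase errors, so the length-$(2^n+c_l+c_r)$ claim follows.

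The main obstacle is the combinatorial census underlying the range of admissible $\delta_1$. One must count how many degree-one cyclotomic factors can be appended to $g_1(x)$ while maintaining (i) disjointness from cosets already used, (ii) absence of any reciprocal pair appearing simultaneously, and (iii) avoidance of the self-reciprocal cosets $C_0$ and $C_{2^{n-1}}$. This reduces to a coset-by-coset census using Lemma~\ref{lm1} and Lemma~\ref{pro_order(q)}, and confirming the stated upper bound $\delta_1 \le 2^{n-2}-2$ is where any subtlety lives; the three earlier steps are essentially forced once the right $(g_1, g_2)$ are in hand.
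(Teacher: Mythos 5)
Your proposal follows essentially the same route as the paper: the same generator polynomials $g_1,g_2$ built from $\hat{M}_S(x)$, $\overline{\hat{M}_S(x)}$ and the even factors, the same appeal to Lemmas~\ref{gnerate_matrix_dual_containing}, \ref{construct_augmented_c}, \ref{BCH_bound} and \ref{ord(f(x))bound}, and the same degree count $\deg(g_1)=2^{n-2}+2^{n-3}+\delta_1$ feeding into Lemma~\ref{quantum_construction}. The one point you flag as unresolved --- the coset census justifying the range $\delta_1\le 2^{n-2}-2$ --- is likewise not carried out in the paper's own proof, so your argument is at the same level of completeness.
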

\begin{proof}
Let $\hat{M}_S(x)$ and $\overline{\hat{M}_S(x)}$ be the same as defined in Lemma \ref{BCH_bound}. Then let
\begin{equation*}
	g_1(x)=\hat{M}_S(x)\prod_{j=1}^{2^{n-3}}(x-\alpha^{2j})\prod_{i=1}^{\delta_1}M_{s_{i}}(x)
\end{equation*}
and 
\begin{equation*}
	g_2(x)=\overline{\hat{M}_S(x)}\prod_{j=1}^{2^{n-3}}(x-\alpha^{2j})\prod_{i=1}^{\delta_1}M_{s_i}^{\upvarepsilon_i}(x),
\end{equation*}
  where $\upvarepsilon_i$ is either $0$ or $1$ for each $i$, $\alpha$ is given by Definition \ref{alpha_def}, and $C_{s_i}$ are some cyclotomic cosets that exclude $\hat{C_S}$, $C_{2j}$ and $C_{2^n-2j}$. By Lemmas \ref{gnerate_matrix_dual_containing} and \ref{construct_augmented_c}, we have cyclic codes $C_a=\langle g_1(x)\rangle$ and $C_b=\langle g_2(x)\rangle$ satisfying $C_a^\perp\subseteq C_a\subseteq C_b$. Besides, $f(x)=\frac{g_1(x)}{g_2(x)}$ contains an irreducible factor of $M_S(x)$. Thus, by Lemma \ref{ord(f(x))bound}, we have ${\rm ord}(f(x))=2^n$, which is the upper bound of the synchronization capabilities of this QSC. By Lemma \ref{BCH_bound}, we know that the minimum distance $d_1$ of the cyclic code $C_a$ generated by $g_1(x)$ is at least $2^{n-2}+1$ and the minimum distance $d_2$ of the cyclic code $C_b$ generated by $g_2(x)$ is at least $2^{n-2}$. Note ${\rm deg}(g_1(x))=2^{n-2}+2^{n-3}+\delta_1$. Then by Lemma \ref{quantum_construction}, one can get the conclusion.\qed
\end{proof}

\begin{example}
	Let $z=3, c=5, n=4$. Then $q=1+2^zc=41$, and the length of the QSC is 16. When $r=n$, we obtain 
	\begin{equation*}
		M_S(x)=(x-\alpha)(x-\alpha^3)(x-\alpha^5)(x-\alpha^7)(x-\alpha^9)(x-\alpha^{11})(x-\alpha^{13})(x-\alpha^{15}). 
	\end{equation*}
	  According to Lemma \ref{BCH_bound}, let
	\begin{equation*}
	\hat{M}_S(x)=(x-\alpha)(x-\alpha^3)(x-\alpha^9)(x-\alpha^{11}),
	\end{equation*}
	 and
	 \begin{equation*}
	 	\begin{aligned}
	 	C_a&=\langle(x-\alpha)(x-\alpha^3)(x-\alpha^9)(x-\alpha^{11})\prod_{j=1}^{2}(x-\alpha^{2j})(x-\alpha^6)\rangle,\\	 
    	C_b&=\langle(x-\alpha^{3})(x-\alpha^{11})\prod_{j=1}^{2}(x-\alpha^{2j})\rangle.
 		\end{aligned}
 \end{equation*}
   Then we have $C_a=[16,9,6]_{41}$ and $C_b=[16,12,4]_{41}$.  Furthermore, $f(x)$ and $M_S(x)$ have the common factor $(x-\alpha)(x-\alpha^9)$. It means that the synchronization capabilities of this QSC reaches the upper bound 16 by Lemma \ref{ord(f(x))bound}. By theorem \ref{result1}, we obtain it with specific parameters $(c_l,c_r)-[[16+c_l+c_r,2]]_{41}$ and $c_l+c_r<16$ which can correct at least up to 1 bit errors and at least up to 2 phase errors.
	
\end{example}

\section{Conclusion}
In this paper, we obtain a new class of quantum synchronizable codes of length $2^n$ over ${\rm F}_q$, whose best attainable misalignment tolerance can be achieved, where $q\equiv 1({\rm mod}\ 4)$. These QSCs also possess good error-correcting capability towards bit errors and phase errors. Besides, the minimum distance of the cyclic codes we obtained is at least $2^{n-2}+1$ and $2^{n-2}$ respectively, which means the corresponding QSCs can correct at least up to$\lfloor\frac{2^{n-2}-1}{2}\rfloor$ bit errors and at least up to $2^{n-3}$ phase errors. It would be interesting to construct QSCs of length $2^n$ with maximal tolerance against misalignment based on constacyclic codes.


%
%


\begin{thebibliography}{}
%
\bibitem{Shor P W}Shor, P.W.: Scheme for reducing decoherence in quantum computer memory. Phys. Rev. A \textbf{52}(4), R2493-R2496(1995)
\bibitem{Steane A M}Steane, A.: Multiple particle interference and quantum error correction. Proc. roy. soc. london Ser. a. \textbf{452}, 2551-2557(1996)
\bibitem{Calderbank S}Calderbank, A.R., Shor, P.W.: Good quantum error-correcting codes exist. Phys. Rev. A \textbf{54}(2), 1098-1106(1996)
\bibitem{Calderbank R}Calderbank, A.R., Rains, E.M., Shor, P.W., Sloane, N.J.A.: Quantum error correction via codes over GF(4). IEEE International Symposium on Information Theory \textbf{44}, 1369-1387(1998)
\bibitem{FujiBsf}Fujiwara, Y.: Block synchronization for quantum information. Phys. Rev. A \textbf{87}(2), 022344(2013)
\bibitem{FujiTW}Fujiwara, Y., Tonchev, V.D., Wong, T.W.H.: Algebraic techniques in designing quantum synchronizable codes. Phys. Rev. A \textbf{88}(1), 012318(2013)
\bibitem{XieYFu}Xie, Y., Yuan, J., Fujiwara, Y.: Quantum synchronizable codes from quadratic residue codes and their supercodes. Proc. IEEE Inf. Theory Workshop(ITW), 172-176(2014)
\bibitem{LiZL}Li, L., Zhu, S., Liu, L.: Quantum synchronizable codes from the cyclotomy of order four. IEEE Commun. Lett. \textbf{23}(1), 12-15(2019)
\bibitem{ShiQX}Shi, X., Yue, Q., Huang, X.: Quantum synchronizable codes from the Whiteman's generalized cyclotomy. Cryptogr. Commun. \textbf{13}(2020)
\bibitem{WangT}Wang, T., Yan, T., Sidorenko, V., Wang, X.: Quantum synchronizable codes from cyclotomic classes of order two over $Z_{2q}$. arXiv:2106.02470v2(2021) 
\bibitem{G.K. Bakshi}Gurmeet, K., Bakshi, Madhu, Raka: A class of constacyclic codes over a finite field. Finite Fields Appl. \textbf{18}, 362-377(2012)
\bibitem{Mac Willianms}MacWilliams, F.J., Sloane, N.J.A.: The Theory of Error-Correctiong Codes 2nd edn. Amsterdam, The Netherlands: North Holland(1978)
\bibitem{San Ling1}Ling, S., Xing, C.: Coding Theory. Cambridge University Press, New York(2004)
\bibitem{WuYueS}Wu, Y., Yue, Q., Fan, S.: Self-reciprocal and self-conjugate-reciprocal irreducible factors of $x^n-\lambda$ and their applications. Finite Fields Appl. \textbf{63}, 101648(2020)
\bibitem{codetable}Grassl, M.: Bounds on the minimum distance of linear codes and quantum codes. Online available at https://www.codetables.de(2007)

\end{thebibliography}


\end{document}